\providecommand{\U}[1]{\protect\rule{.1in}{.1in}}
\newtheorem{theorem}{Theorem}
\newtheorem{lemma}[theorem]{Lemma}
\newtheorem{remark}[theorem]{Remark}
\newenvironment{proof}[1][Proof]{\noindent\textbf{#1.} }{\ \rule{0.5em}{0.5em}}
\begin{document}

\title{\textbf{Analytically Periodic Solutions to the }$3$\textbf{-dimensional
Euler-Poisson Equations of Gaseous Stars with Negative Cosmological Constant}}
\author{Y\textsc{uen} M\textsc{anwai\thanks{E-mail address: nevetsyuen@hotmail.com }}\\\textit{Department of Applied Mathematics,}\\\textit{The Hong Kong Polytechnic University,}\\\textit{Hung Hom, Kowloon, Hong Kong} }
\date{Revised Date: 11-June-2009}
\maketitle

\begin{abstract}
By the extension of the 3-dimensional analytical solutions of Goldreich and
Weber \cite{GW} with adiabatic exponent $\gamma=4/3$, to the (classical)
Euler-Poisson equations without cosmological constant, the self-similar
(almost re-collapsing) time-periodic solutions with negative cosmological
constant ($\Lambda<0$) are constructed. The solutions with time-periodicity
are novel. On basing these solutions, the time-periodic and almost
re-collapsing model is conjectured, for some gaseous stars.

Key Words: Analytically Periodic Solutions, Re-collapsing, Cosmological
Constant, Euler-Poisson Equations, Collapsing

\end{abstract}

\section{Introduction}

The evolution of a self-gravitating fluid, such as gaseous stars, can be
formulated by the isentropic Euler-Poisson systems of the following form:
\begin{equation}
\left\{
\begin{array}
[c]{rl}%
{\normalsize \rho}_{t}{\normalsize +\nabla\bullet(\rho\vec{u})} &
{\normalsize =}{\normalsize 0,}\\
{\normalsize (\rho\vec{u})}_{t}{\normalsize +\nabla\bullet(\rho\vec{u}%
\otimes\vec{u})+\nabla}P & {\normalsize =}{\normalsize -\rho\nabla\Phi,}\\
{\normalsize \Delta\Phi(t,\vec{x})} & {\normalsize =\alpha(N)}%
{\normalsize \rho-\Lambda,}%
\end{array}
\right.  \label{Euler-Poisson}%
\end{equation}
where $\alpha(N)$ is a constant related to the unit ball in $R^{N}$:
$\alpha(2)=2\pi$ and $\alpha(3)=4\pi$. And as usual, $\rho=\rho(t,\vec{x})$
and $\vec{u}=\vec{u}(t,\vec{x})\in\mathbf{R}^{N}$ are the density, the
velocity respectively. $P=P(\rho)$\ is the pressure function. While $\Lambda$
is the cosmological constant, for $\Lambda>0$, the space is open; for
$\Lambda<0$, the space is closed; for $\Lambda=0$ the space is flat.

In the above systems, the self-gravitational potential field $\Phi=\Phi
(t,\vec{x})$\ is determined by the density $\rho$ itself, through the Poisson
equation (\ref{Euler-Poisson})$_{3}$. For $N=3$, (\ref{Euler-Poisson}) are the
classical (non-relativistic) descriptions of a galaxy, in astrophysics. See
\cite{BT}, \cite{C}, \cite{KW}, for details about the systems.

The $\gamma$-law can be applied on the pressure term $P(\rho)$, i.e.%
\begin{equation}
{\normalsize P}\left(  \rho\right)  {\normalsize =K\rho}^{\gamma}\doteq
\frac{{\normalsize \rho}^{\gamma}}{\gamma},\label{gamma}%
\end{equation}
which is a common hypothesis. The constant $\gamma=c_{P}/c_{v}\geq1$, where
$c_{P}$, $c_{v}$\ are the specific heats per unit mass under constant pressure
and constant volume respectively, is the ratio of the specific heats, that is,
the adiabatic exponent in (\ref{gamma}). In particular, the fluid is called
isothermal if $\gamma=1$.\newline Historically in astrophysics for zero
constant $(\Lambda=0)$ background, for the $3$-dimensional space, the
hydrostatic equilibrium specified by $u=0$, was studied. According to
\cite{C}, the ratio between the core density $\rho(0)$ and the mean density
$\overset{\_}{\rho}$ for $6/5<\gamma<2$\ is given by%
\begin{equation}
\frac{\overset{\_}{\rho}}{\rho(0)}=\left(  \frac{-3}{z}\dot{y}\left(
z\right)  \right)  _{z=z_{0}},
\end{equation}
where $y$\ is the solution of the Lane-Emden equation with $n=1/(\gamma-1)$
for $\gamma>1:$%
\begin{equation}
\ddot{y}(z)+\dfrac{2}{z}\dot{y}(z)+y(z)^{n}=0,\text{ }y(0)=\alpha>0,\text{
}\dot{y}(0)=0,\text{ }n=\frac{1}{\gamma-1},
\end{equation}
and $z_{0}$\ is the first zero of $y(z_{0})=0$. We can solve the Lane-Emden
equation analytically only for%
\begin{equation}
y_{anal}(z)\doteq\left\{
\begin{array}
[c]{ll}%
1-\dfrac{1}{6}z^{2}, & n=0;\\
\dfrac{\sin z}{z}, & n=1;\\
\dfrac{1}{\sqrt{1+z^{2}/3}}, & n=5,
\end{array}
\right.
\end{equation}
and for the other values, only numerical values can be obtained. It can be
shown that for $n<5$, the radius of polytropic models is finite; for $n\geq5$,
the radius is infinite.\newline As we are interested in the solutions in
radial symmetry, the Poisson equation (\ref{Euler-Poisson})$_{3}$ is
transformed to%
\begin{equation}
{\normalsize r^{N-1}\Phi}_{rr}\left(  {\normalsize t,x}\right)  +\left(
N-1\right)  r^{N-2}\Phi_{r}{\normalsize =}\alpha\left(  N\right)
({\normalsize \rho-\Lambda)r^{N-1},}%
\end{equation}%
\begin{equation}
\Phi_{r}=\frac{\alpha\left(  N\right)  }{r^{N-1}}\int_{0}^{r}(\rho
(t,s)-\Lambda)s^{N-1}ds.
\end{equation}
For the analytical collapsing solutions, the well-known results are
constructed by Goldreich and Weber \cite{GW} in 1980 for $\gamma=4/3$ in
$R^{3}$. The analytical collapsing solutions without compact support for
$\gamma=1$ in $R^{2}$ are constructed by Yuen \cite{Y1} in 2008.

For simplicity, we take the constant $\Lambda=-\frac{3}{4\pi}$ for $\Lambda
<0$. For the background knowledge for the Euler-Poisson equations with
cosmological constant $\Lambda$, the interested readers may refer \cite{FT},
\cite{PR}.

Our main results are able to obtain the analytically time-periodic solutions
for the Euler-Poisson equations with the negative cosmological constant
($\Lambda<0$) in spherical symmetry:%
\begin{equation}
\vec{u}=\frac{\vec{x}}{r}V(t,r),
\end{equation}
where the diameter $r:=\left(  \sum_{i=1}^{3}x_{i}^{2}\right)  ^{1/2}$ in
$R^{3}$,%
\begin{equation}
\left\{
\begin{array}
[c]{rl}%
\rho_{t}+V\rho_{r}+\rho V_{r}+{\normalsize \dfrac{2}{r}\rho V} &
{\normalsize =0,}\\
\rho\left(  V_{t}+VV_{r}\right)  +\frac{\partial}{\partial r}(K\rho_{{}}%
^{4/3}) & {\normalsize =-}\dfrac{4\pi\rho}{r^{2}}\int_{0}^{r}(\rho
(t,s)+\frac{3}{4\pi})s^{2}ds,
\end{array}
\right.  \label{gamma=1}%
\end{equation}
with the extension of the analytical solutions of Goldreich and Weber
\cite{GW} in the theorem:

\begin{theorem}
\label{thm:1}For the $3$-dimensional Euler-Poisson equations in spherical
symmetry (\ref{gamma=1}), there exists a family of solutions, for $\gamma
=4/3$:%
\begin{equation}
\left\{
\begin{array}
[c]{c}%
\rho(t,r)=\left\{
\begin{array}
[c]{c}%
\frac{1}{a(t)^{3}}y(\frac{r}{a(t)})^{3},\text{ for }r<a(t)Z_{\nu};\\
0,\text{ for }a(t)Z_{\nu}\leq r.
\end{array}
\right.  \text{, }V{\normalsize (t,r)=}\frac{\overset{\cdot}{a}(t)}%
{a(t)}{\normalsize r;}\\
\overset{\cdot\cdot}{a}(t){\normalsize =}\frac{-\lambda}{a(t)^{2}}-a(t),\text{
}{\normalsize a(0)=a}_{0}>0{\normalsize ,}\text{ }\overset{\cdot}%
{a}(0){\normalsize =a}_{1};\\
\overset{\cdot\cdot}{y}(z){\normalsize +}\frac{2}{z}\overset{\cdot}%
{y}(z){\normalsize +}\frac{\pi}{K}{\normalsize y(z)}^{3}{\normalsize =}%
\mu,\text{ }y(0)=\alpha>0,\text{ }\overset{\cdot}{y}(0)=0,
\end{array}
\right.  \label{solution1}%
\end{equation}
where $K>0$, $\mu=3\lambda/(4K)$, and the finite $Z_{\nu}$ is the first zero
point of $y(z)$.\newline In particular,\newline(1) $\lambda=0$ and $a_{1}=0$,
the solutions collapse in the finite time $t=\pi/2;$\newline(2) $\lambda<0$,
the solutions are non-trivially time-periodic, except the case with
$a_{0}=\sqrt[3]{\lambda}$ and $a_{1}=0$.;\newline(3) $\lambda>0$ and
$a_{1}\leq0$, the solutions collapse in a finite time $T$.
\end{theorem}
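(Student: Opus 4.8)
The plan is to verify that the proposed separated ansatz solves the two PDEs in (\ref{gamma=1}) by direct substitution, and then to analyze the resulting ODE for $a(t)$ to establish the three qualitative claims. The key structural idea is that the velocity field $V=\dot a/a\, r$ is linear in $r$, so the flow is a pure scaling; this is exactly what makes the self-similar ansatz $\rho(t,r)=a(t)^{-3}y(r/a(t))^3$ compatible with mass conservation. I would first substitute $\rho$ and $V$ into the continuity equation (\ref{gamma=1})$_1$. Writing $z=r/a$, the time derivative $\rho_t$ produces the factor $-\dot a/a$ times $(3a^{-3}y^3 + a^{-3}\cdot 3y^2\, z\, y')$, while $V\rho_r+\rho V_r+\tfrac{2}{r}\rho V$ should collapse to cancel these terms identically for any profile $y$. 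I expect the continuity equation to hold automatically, reflecting the fact that the $r$-linear velocity is precisely the scaling vector field, so no constraint on $y$ arises from it.

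Next I would substitute into the momentum equation (\ref{gamma=1})$_2$. Here the gravitational integral must be evaluated: for $r<a Z_\nu$ one has $\int_0^r(\rho+\tfrac{3}{4\pi})s^2\,ds$, which splits into the self-gravity piece $\int_0^r a^{-3}y(s/a)^3 s^2\,ds$ and the cosmological piece $\tfrac{3}{4\pi}\cdot\tfrac{r^3}{3}$. Changing variables $s=a\sigma$ turns the first piece into $\int_0^{z} y(\sigma)^3\sigma^2\,d\sigma$, and the Lane--Emden-type equation (\ref{solution1})$_3$ is exactly the identity $\tfrac{d}{dz}\bigl(z^2 \dot y\bigr)=z^2(\mu-\tfrac{\pi}{K}y^3)$, so that integral is expressible through $\dot y(z)$. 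The left side of the momentum equation, using $V_t+VV_r=(\ddot a/a)\,r$ and $\partial_r(K\rho^{4/3})=K\,a^{-4}\partial_z(y^4)$, must then match the right side after the $a^{-2}$ versus $a$ powers of the two source terms separate. This separation is what forces the ODE $\ddot a=-\lambda/a^2-a$ with $\lambda$ tied to $\mu$ by $\mu=3\lambda/(4K)$: the $a^{-2}$ scaling of Newtonian self-gravity gives the $-\lambda/a^2$ term, while the $\Lambda<0$ source, being proportional to $r$ (hence to $a$ after factoring), gives the linear $-a$ term.

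The main obstacle — and the genuinely new content beyond the $\Lambda=0$ Goldreich--Weber computation — is the qualitative analysis of $\ddot a=-\lambda/a^2-a$. For part (1), with $\lambda=0$ and $a_1=0$ the equation reduces to $\ddot a=-a$ with $a(0)=a_0$, $\dot a(0)=0$, giving $a(t)=a_0\cos t$, which first vanishes at $t=\pi/2$; I would confirm that $a\to 0$ is the collapse and that the profile remains valid up to that time. For part (2), $\lambda<0$, I would introduce the energy-type first integral obtained by multiplying by $\dot a$: namely $\tfrac{1}{2}\dot a^2 - \tfrac{\lambda}{a} + \tfrac{1}{2}a^2 = E$ is conserved, so the effective potential $W(a)=-\lambda/a+\tfrac12 a^2$ (with $-\lambda>0$) is a convex well on $a>0$ with a unique minimum at the critical point $a_\ast=\sqrt[3]{-\lambda}$; here I would reconcile this with the stated $\sqrt[3]{\lambda}$, presumably a sign convention in the paper. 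Bounded orbits in a potential well are periodic, which yields time-periodicity for all data except the equilibrium $(a_0,a_1)=(a_\ast,0)$ sitting at the bottom of the well. For part (3), $\lambda>0$ and $a_1\le0$, both forcing terms $-\lambda/a^2$ and $-a$ are negative for $a>0$, so $\ddot a<0$ throughout; starting with $\dot a(0)=a_1\le 0$ keeps $\dot a$ negative and decreasing, forcing $a$ to reach $0$ in finite time $T$, which I would bound by comparison with the $\lambda=0$ solution or by integrating the energy relation. The delicate points are ensuring the support boundary $r=a(t)Z_\nu$ moves consistently so that the piecewise density stays a genuine weak solution across the free boundary, and carefully justifying the finite-time collapse estimates; the periodicity argument via the conserved energy is the heart of the novelty.
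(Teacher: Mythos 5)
Your proposal is correct and follows essentially the same route as the paper: the continuity equation is verified by direct substitution of the scaling ansatz, the momentum equation reduces to a radial identity enforced by the modified Lane--Emden equation with $\mu=3\lambda/(4K)$, part (1) is the explicit solution $a_0\cos t$, part (2) uses the conserved energy $\tfrac12\dot a^2+\tfrac{\lambda}{a}+\tfrac12 a^2$ and the potential well (the paper additionally writes out the period integral to confirm it is finite, and indeed carries the same $\sqrt[3]{\lambda}$ versus $\sqrt[3]{-\lambda}$ sign sloppiness you flag), and part (3) is the same concavity/contradiction argument. The only cosmetic difference is that you integrate $\frac{d}{dz}(z^2\dot y)=z^2(\mu-\tfrac{\pi}{K}y^3)$ directly to eliminate the mass integral, whereas the paper packages the residual as a function $Q(x)$ satisfying $\dot Q=-\tfrac{2}{x}Q$ with $Q(0)=0$; these are equivalent.
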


\section{Time-Periodic Solutions}

In this section, before presenting the proof of Theorem \ref{thm:1}, we
prepare some lemmas. First, we obtain a general class of solutions for the
mass equation in spherical symmetry (\ref{gamma=1})$_{1}$.

\begin{lemma}
\label{lem:generalsolutionformasseq copy(1)}For the 3-dimensional equation of
conservation of mass in spherical symmetry:
\begin{equation}
\rho_{t}+V\rho_{r}+\rho V_{r}+\frac{2}{r}\rho V=0, \label{eq1122}%
\end{equation}
there exist solutions,%
\begin{equation}
\rho(t,r)=\frac{f(r/a(t))}{a(t)^{3}},\text{ }V{\normalsize (t,r)=}%
\frac{\overset{\cdot}{a}(t)}{a(t)}{\normalsize r,} \label{eq1123}%
\end{equation}
with the form $f\geq0\in C^{1}$ and $a(t)>0\in C^{1}.$
\end{lemma}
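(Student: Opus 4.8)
The plan is to verify the claimed solution by direct substitution of the ansatz (\ref{eq1123}) into the continuity equation (\ref{eq1122}), exploiting its self-similar structure. The key observation is that the velocity field $V=(\dot a/a)r$ is chosen precisely so that the flow respects the scaling $r\mapsto r/a(t)$; I expect this to force the left-hand side of (\ref{eq1122}) to collapse to an identity that holds independently of the profile $f$.

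First I would introduce the self-similar variable $z:=r/a(t)$ and record the relevant partial derivatives. Writing $\rho=a^{-3}f(z)$ and applying the chain rule, noting $\partial_t z=-r\dot a\,a^{-2}$ and $\partial_r z=a^{-1}$, one obtains
\begin{equation}
\rho_t=-3a^{-4}\dot a\,f(z)-r\dot a\,a^{-5}f'(z),\qquad \rho_r=a^{-4}f'(z),
\end{equation}
while from $V=(\dot a/a)r$ we have immediately $V_r=\dot a/a$. These are the only ingredients needed.

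Next I would assemble the four terms of (\ref{eq1122}). A short computation gives
\begin{equation}
V\rho_r=r\dot a\,a^{-5}f'(z),\qquad \rho V_r=a^{-4}\dot a\,f(z),\qquad \frac{2}{r}\rho V=2a^{-4}\dot a\,f(z).
\end{equation}
Adding all four terms, the two contributions proportional to $f'(z)$ cancel, since they are $\mp\, r\dot a\,a^{-5}f'(z)$, and the terms proportional to $f(z)$ sum to $(-3+1+2)a^{-4}\dot a\,f(z)=0$. Hence the left-hand side of (\ref{eq1122}) vanishes identically for any admissible $f$ and $a$, which is exactly the assertion.

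The argument involves no genuine obstacle: the substance is entirely in the bookkeeping of the chain rule, and the favorable cancellation is guaranteed by the homogeneity degree $-3$ of $\rho$ in $a$ (which reflects conservation of total mass, as the volume element scales like $a^{3}$) together with the choice $V_r=\dot a/a$. The only point requiring mild care is tracking the powers of $a$ consistently when differentiating $f(r/a)$ in $t$, since the factor $-r\,a^{-2}$ coming from $\partial_t(r/a)$ is exactly what produces the $f'$ term that later annihilates $V\rho_r$. The hypotheses $f\geq 0$ and $f,a\in C^1$ with $a>0$ merely ensure that $\rho$ is a well-defined nonnegative $C^1$ function and that all the derivatives above exist.
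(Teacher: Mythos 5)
Your proposal is correct and follows exactly the paper's own argument: direct substitution of the ansatz into the continuity equation, with the $f'$ terms cancelling pairwise and the $f$ terms summing to $(-3+1+2)\,\dot a\,a^{-4}f=0$. The added remark explaining the cancellation via mass conservation and the scaling of the volume element is a nice touch but does not change the substance.
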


\begin{proof}
We just plug (\ref{eq1123}) into (\ref{eq1122}). Then
\begin{align}
&  \rho_{t}+V\rho_{r}+\rho V_{r}+\frac{2}{r}\rho V\\
&  =\frac{-3\overset{\cdot}{a}(t)f(r/a(t))}{a(t)^{4}}-\frac{\overset{\cdot}%
{a}(t)r\overset{\cdot}{f}(r/a(t))}{a(t)^{5}}\\
&  +\frac{\overset{\cdot}{a}(t)r}{a(t)}\frac{\overset{\cdot}{f}(r/a(t))}%
{a(t)^{4}}+\frac{f(r/a(t))}{a(t)^{3}}\frac{\overset{\cdot}{a}(t)}{a(t)}%
+\frac{2}{r}\frac{f(r/a(t))}{a(t)^{3}}\frac{\overset{\cdot}{a}(t)}{a(t)}r\\
&  =0.
\end{align}
The proof is completed.
\end{proof}

The following core lemma is needed to show the cyclic phenomena of the
solutions (\ref{solution1}).

\begin{lemma}
\label{lemma22}For the ordinary differential equation,%
\begin{equation}
\ddot{a}(t)-\frac{\lambda}{a(t)^{2}}+a(t)=0,\text{ }a(0)=a_{0}>0,\text{ }%
\dot{a}(0)=a_{1},\label{eq124}%
\end{equation}
with $\lambda>0$, there exist non-trivially periodic solutions except the case
with $a_{0}=\sqrt[3]{\lambda}$ and $a_{1}=0$.
\end{lemma}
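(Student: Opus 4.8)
The plan is to recognize the ODE $\ddot a - \lambda/a^2 + a = 0$ as a conservative (autonomous) Newtonian system with a potential, and to exploit the resulting energy conservation to show that orbits in the $(a,\dot a)$ phase plane are closed curves, which is exactly periodicity. First I would multiply the equation by $\dot a$ and integrate in $t$ to obtain a conserved energy
\begin{equation}
E(a,\dot a)=\frac{1}{2}\dot a^{2}+\frac{\lambda}{a}+\frac{1}{2}a^{2}=\text{const},
\end{equation}
so that the motion corresponds to a particle of unit mass in the effective potential $W(a)=\lambda/a+\tfrac12 a^2$ on the half-line $a>0$. Since $\lambda>0$, the potential $W$ tends to $+\infty$ both as $a\to 0^{+}$ and as $a\to+\infty$, and a direct computation gives $W'(a)=-\lambda/a^2+a$, which vanishes only at $a_{\ast}=\lambda^{1/3}$, where $W''(a_{\ast})=2\lambda/a_{\ast}^3+1>0$. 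Hence $W$ has a single strict minimum at $a_{\ast}=\sqrt[3]{\lambda}$ and is strictly convex-like (monotone on each side), so every energy level $E>W(a_{\ast})$ is attained at exactly two points $a_{-}<a_{\ast}<a_{+}$, the turning points where $\dot a=0$.

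Next I would argue that for any initial data $(a_0,a_1)$ other than the equilibrium $(\sqrt[3]{\lambda},0)$ the energy satisfies $E_0:=E(a_0,a_1)>W(a_{\ast})$, so the solution is confined to the bounded interval $[a_{-},a_{+}]$ with $0<a_{-}$ and $a_{+}<\infty$, where $a_{\pm}$ are the two roots of $W(a)=E_0$. Because $W$ grows without bound at both ends of $(0,\infty)$, the trajectory can neither reach $a=0$ nor escape to infinity; solving $\dot a=\pm\sqrt{2(E_0-W(a))}$ shows that $\dot a$ keeps a fixed sign on each monotone arc and changes sign exactly at the turning points, so the phase-plane orbit is a simple closed curve encircling the center $(a_{\ast},0)$. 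The period is then finite and given by the standard formula
\begin{equation}
T=2\int_{a_{-}}^{a_{+}}\frac{da}{\sqrt{2\bigl(E_0-W(a)\bigr)}},
\end{equation}
and the motion is genuinely periodic (non-constant) because $a_{-}<a_{+}$; the degenerate case $a_0=\sqrt[3]{\lambda},\ a_1=0$ is exactly the excluded equilibrium, where $a(t)\equiv\sqrt[3]{\lambda}$ is constant rather than non-trivially periodic.

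The main obstacle will be making the finiteness of the period $T$ rigorous, since the integrand has integrable singularities at both turning points where $E_0-W(a)$ vanishes: I expect $W'(a_{\pm})\neq 0$ there (as the turning points are distinct from the unique critical point $a_{\ast}$), so near each endpoint $E_0-W(a)\sim |W'(a_{\pm})|\,|a-a_{\pm}|$, giving an integrand of order $|a-a_{\pm}|^{-1/2}$, which is integrable and hence yields a finite $T>0$; I would spell out this local expansion to confirm convergence. A cleaner alternative that sidesteps the improper integral is to invoke the standard Poincaré–Bendixson / conservative-system theory: the level set $\{E=E_0\}$ is a compact one-dimensional manifold without equilibria (for $E_0>W(a_{\ast})$), hence a single closed orbit, and any orbit on it is periodic; combined with smoothness of the right-hand side on $a>0$ this guarantees a finite minimal period and completes the proof.
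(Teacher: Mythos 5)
Your proposal is correct and follows essentially the same route as the paper: multiply by $\dot a$ to obtain the conserved energy $\tfrac12\dot a^2+\lambda/a+\tfrac12 a^2$, note the potential has a unique minimum at $\sqrt[3]{\lambda}$ and blows up at both ends of $(0,\infty)$, conclude the orbits are closed, and verify the period integral converges because the integrand has only square-root singularities at the turning points (where $W'\neq 0$). Your local expansion near $a_\pm$ is the same finiteness argument the paper carries out by substituting $G=\theta-W(a)$ and bounding $\int dG/\sqrt{G}$, and is if anything stated more cleanly.
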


\begin{proof}
To the equation (\ref{eq124}), multiply $\dot{a}(t)$ and then integrate it:%
\begin{equation}
\frac{\dot{a}(t)^{2}}{2}+\frac{\lambda}{a(t)}+\frac{a(t)^{2}}{2}%
=\theta,\label{rt1}%
\end{equation}
with the constant $\theta=\frac{a_{1}^{2}}{2}+\frac{\lambda}{a_{0}}%
+\frac{a_{0}^{2}}{2}>0$.\newline We define the kinetic energy:
\begin{equation}
F_{kin}:=\frac{\dot{a}(t)^{2}}{2},
\end{equation}
and the potential energy:%
\begin{equation}
F_{pot}=\frac{\lambda}{a(t)}+\frac{a(t)^{2}}{2}.
\end{equation}
The total energy is conserved:%
\begin{equation}
\frac{d}{dt}(F_{kin}+F_{pot})=0.
\end{equation}
Observe the potential energy has only one global minimum at $a_{\min}%
=\sqrt[3]{\lambda}$, for $a(t)\in(0,+\infty)$. Therefore, by the classical
energy method for conservative systems (in Section 4.3 of \cite{LS}), the
solutions have the closed trajectory. We calculate the time for traveling the
closed orbit:%
\begin{equation}
T=\int_{0}^{t}dt=2\int_{a_{\max}}^{a_{\min}}\frac{-da(t)}{\sqrt{2[\theta
-(\frac{\lambda}{a(t)}+\frac{a(t)^{2}}{2})]}}=2\int_{a_{\min}}^{a_{\max}}%
\frac{da(t)}{\sqrt{2[\theta-(\frac{\lambda}{a(t)}+\frac{a(t)^{2}}{2})]}%
},\label{hk1}%
\end{equation}
where $a_{\min}=\underset{t\geq0}{\inf}(a(t))$ and $a_{\max}=\underset{t\geq
0}{\sup}(a(t)).$\newline Let $G(t):=\theta-(\frac{\lambda}{a(t)}%
+\frac{a(t)^{2}}{2})$, $G_{0}:=\theta-(\frac{\lambda}{a_{\min}+\epsilon}%
+\frac{(a_{\min}+\epsilon)^{2}}{2})$ $>0$ and $G_{1}:=\theta-(\frac{\lambda
}{a_{\min}+\epsilon}+\frac{(a_{\min}+\epsilon)^{2}}{2})>0$. Except the case
with $a_{0}=\sqrt[3]{\lambda}$ and $a_{1}=0$, the equation (\ref{hk1})
becomes
\begin{align}
T &  =\int_{a_{\min}}^{a_{\min}+\epsilon}\frac{2da(t)}{\sqrt{2[\theta
-(\frac{\lambda}{a(t)}+\frac{a(t)^{2}}{2})]}}+\int_{a_{\min}+\epsilon
}^{a_{\max}-\epsilon}\frac{2da(t)}{\sqrt{2[\theta-(\frac{\lambda}{a(t)}%
+\frac{a(t)^{2}}{2})]}}\\
&  +\int_{a_{\max}-\epsilon}^{a_{\max}}\frac{2da(t)}{\sqrt{2[\theta
-(\frac{\lambda}{a(t)}+\frac{a(t)^{2}}{2})]}}\\
&  \leq\underset{a_{\min}\leq a(t)\leq a_{\min}+\epsilon}{\sup}\left\vert
\frac{1}{\frac{\lambda}{a(t)^{2}}-a(t)}\right\vert \int_{0}^{G_{0}}\frac
{\sqrt{2}dG(t)}{\sqrt{G(t)}}+\int_{a_{\min}+\epsilon}^{a_{\max}-\epsilon}%
\frac{2da(t)}{\sqrt{2[\theta-(\frac{\lambda}{a(t)}+\frac{a(t)^{2}}{2})]}}\\
&  +\underset{a_{\max}-\epsilon\leq a(t)\leq a_{\max}}{\sup}\left\vert
\frac{1}{\frac{\lambda}{a(t)^{2}}-a(t)}\right\vert \int_{0}^{G_{1}}\frac
{\sqrt{2}dG(t)}{\sqrt{G(t)}}\\
&  =\underset{a_{\min}\leq a\leq a_{\min}+\epsilon}{\sup}\left\vert \frac
{1}{\frac{\lambda}{a(t)^{2}}-a(t)}\right\vert \frac{\sqrt{G_{0}}}{\sqrt{2}%
}+\int_{a_{\min}+\epsilon}^{a_{\max}-\epsilon}\frac{2da(t)}{\sqrt
{2[\theta-(\frac{\lambda}{a(t)}+\frac{a(t)^{2}}{2})]}}\\
&  +\underset{a_{\max}-\epsilon\leq a(t)\leq a_{\max}}{\sup}\left\vert
\frac{1}{\frac{\lambda}{a(t)^{2}}-a(t)}\right\vert \frac{\sqrt{G_{1}}}%
{\sqrt{2}}\\
&  <\infty.
\end{align}
Therefore, the solutions to the differential equation (\ref{eq124}) are
time-periodic.\newline The proof is completed
\end{proof}

In the solutions (\ref{solution1}), due to the properties of the modified
Emden equation (\ref{solution1})$_{2}$, the collapsing phenomena are observed.
The following lemma can be proved immediately.

\begin{lemma}
\label{lemma33}For the ordinary differential equation,%
\begin{equation}
\ddot{a}(t)+\frac{\lambda}{a(t)^{2}}=-a(t),\text{ }0<a(0)=a_{0},\ \dot
{a}(0)=a_{1}\leq0,\text{ } \label{eqo1}%
\end{equation}
with $\lambda>0$, \newline there exists a finite time $T$ such that
$\underset{t\rightarrow T^{-}}{\lim}a(t)=0.$
\end{lemma}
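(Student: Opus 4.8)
The plan is to run the classical energy method used in Lemma~\ref{lemma22}, but now exploiting the fact that, because the sign of the $\lambda/a^2$ term is reversed relative to (\ref{eq124}), the effective potential tends to $-\infty$ at the origin, so trajectories are driven into $a=0$ rather than trapped in a potential well. First I would record a monotonicity fact. As long as $a(t)>0$, equation (\ref{eqo1}) gives $\ddot a=-a-\lambda/a^{2}<0$, so $\dot a$ is strictly decreasing; combined with $\dot a(0)=a_{1}\le 0$ this forces $\dot a(t)<0$ for every $t>0$ at which the solution exists. Hence $a$ is strictly decreasing and can leave the region $a>0$ only by reaching $0$.

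Next I would multiply (\ref{eqo1}) by $\dot a(t)$ and integrate to obtain the conserved energy
\[
\frac{\dot a(t)^{2}}{2}-\frac{\lambda}{a(t)}+\frac{a(t)^{2}}{2}=E,\qquad E=\frac{a_{1}^{2}}{2}-\frac{\lambda}{a_{0}}+\frac{a_{0}^{2}}{2}.
\]
Solving for the (negative) velocity gives $\dot a=-\sqrt{g(a)}$ with $g(a):=2E+2\lambda/a-a^{2}$. A direct computation yields $g'(a)=-2\lambda/a^{2}-2a<0$ on $(0,\infty)$, so $g$ is strictly decreasing, while $g(a_{0})=a_{1}^{2}\ge 0$. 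Therefore $g(a)>0$ on the whole interval $(0,a_{0})$, which shows that $\dot a$ never vanishes before $a$ reaches $0$ and that the descending trajectory is well defined all the way down to the origin.

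Finally I would express the collapse time as the integral
\[
T=\int_{0}^{a_{0}}\frac{da}{\sqrt{g(a)}}
\]
and prove convergence at both endpoints. Near $a=0$ one has $g(a)\sim 2\lambda/a$, so the integrand behaves like $\sqrt{a/(2\lambda)}$ and the integral converges there with no difficulty. The hard part will be the upper endpoint $a=a_{0}$ in the borderline case $a_{1}=0$, where $g(a_{0})=0$ makes the integrand singular: the key observation is that $g$ has only a \emph{simple} zero at $a_{0}$, since $g'(a_{0})=-2\lambda/a_{0}^{2}-2a_{0}\neq 0$, so $g(a)\sim |g'(a_{0})|(a_{0}-a)$ and the integrand is $\sim (a_{0}-a)^{-1/2}$, which is integrable; when $a_{1}<0$ the integrand is instead bounded by $1/|a_{1}|$ near $a_{0}$. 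In either case $T<\infty$, and since $\dot a\to-\infty$ as $a\to 0$ the solution actually attains $a=0$ at this finite $T$, giving $\lim_{t\to T^{-}}a(t)=0$.
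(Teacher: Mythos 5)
Your argument is correct, but it is a genuinely different (and more quantitative) route than the paper's. The paper proves Lemma \ref{lemma33} by contradiction with no energy integral at all: assuming $a(t)>0$ for all $t$, it notes $\ddot a=-\lambda/a^{2}-a<0$, so $\dot a$ is strictly decreasing and hence $\dot a(t)\le\dot a(t_{1})<0$ for any fixed $t_{1}>0$; integrating gives the linear upper bound $a(t)\le\dot a(t_{1})(t-t_{1})+a(t_{1})$, which becomes negative in finite time, a contradiction. You instead transplant the energy method of Lemma \ref{lemma22}: the first integral $\tfrac{1}{2}\dot a^{2}-\lambda/a+\tfrac{1}{2}a^{2}=E$ gives $\dot a=-\sqrt{g(a)}$ with $g(a)=2E+2\lambda/a-a^{2}$, and you verify $g>0$ on $(0,a_{0})$ (since $g$ is strictly decreasing with $g(a_{0})=a_{1}^{2}\ge0$) and that $\int_{0}^{a_{0}}g(a)^{-1/2}\,da$ converges at both endpoints — at $a=0$ because $g\sim 2\lambda/a$, and at $a=a_{0}$ in the borderline case $a_{1}=0$ because $g$ has only a simple zero there. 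Your computations check out (in particular $g(a_{0})=a_{1}^{2}$ and $g'<0$ are correct), and you correctly handle the one delicate point the paper's proof also implicitly relies on, namely that the trajectory leaves the rest point $a=a_{0}$, $\dot a=0$ in finite time when $a_{1}=0$. What your approach buys is an explicit formula for the collapse time $T=\int_{0}^{a_{0}}g(a)^{-1/2}\,da$ and precise information on the blow-down rate ($\dot a\to-\infty$ like $-\sqrt{2\lambda/a}$); what the paper's buys is brevity — it needs only concavity of $a$ and a one-line comparison with a linear function, at the cost of saying nothing about the size of $T$.
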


\begin{proof}
If the claim is not true, for all $t\geq0$, we have $a(t)>0$.\newline But the
ordinary differential equation (\ref{eqo1}) becomes%
\begin{equation}
\overset{\cdot\cdot}{a}(t){\normalsize =}\frac{-1}{a(t)^{2}}-a(t)<0.
\end{equation}
This shows $\dot{a}(t)$ is decreasing function: $\dot{a}(t)<\dot{a}(t_{1})$
for all $t>t_{1}>0$,%
\begin{equation}
\dot{a}(t)\leq\dot{a}(t_{1})<0.
\end{equation}
Thus, the solution is bounded by%
\begin{equation}
a(t)=\int_{t_{1}}^{t}\dot{a}(s)ds+a(t_{1})\leq\dot{a}(t_{1})t+a_{0}.
\end{equation}
After a sufficient large time, there exists a finite time $T$ such that
$\underset{t\rightarrow T^{-}}{\lim}a(t)=0$. A contraction is met.\newline The
proof is completed.
\end{proof}

Here we are ready to give proof of Theorem \ref{thm:1}.

\begin{proof}
[Proof of Theorem \ref{thm:1}]From Lemma
\ref{lem:generalsolutionformasseq copy(1)}, we can easily check that
(\ref{solution1}) satisfy (\ref{gamma=1})$_{1}$. We plug the solutions
(\ref{solution1}), into the momentum equation (\ref{gamma=1})$_{2}$:%
\begin{align}
&  \rho(V_{t}+VV_{r})+K\frac{\partial}{\partial r}\rho^{4/3}+\frac{4\pi\rho
}{r^{2}}%
{\displaystyle\int\limits_{0}^{r}}
(\rho(t,s)-\Lambda)s^{2}ds\\
&  =\rho\frac{\overset{\cdot\cdot}{a}(t)}{a(t)}r+4K\left(  \frac{y(\frac
{r}{a(t)})^{3}}{a(t)^{3}}\right)  ^{1/3}\frac{y(\frac{r}{a(t)})^{2}%
\overset{\cdot}{y}(\frac{r}{a(t)})}{a(t)^{4}}+\frac{4\pi\rho}{r^{2}}%
{\displaystyle\int\limits_{0}^{r}}
(\rho(t,s)-\Lambda)s^{2}ds\\
&  =\rho\left[  \frac{\frac{-\lambda}{a(t)^{2}}+\frac{4\pi\Lambda}{3}%
a(t)}{a(t)}r\right]  +4K\frac{y(\frac{r}{a(t)})^{3}}{a(t)^{3}}\frac
{\overset{\cdot}{y}(\frac{r}{a(t)})}{a(t)^{2}}+\frac{4\pi\rho}{r^{2}a(t)^{2}}%
{\displaystyle\int\limits_{0}^{r}}
y^{3}(\frac{s}{a(t)})s^{2}ds-\frac{4\pi\Lambda\rho}{r^{2}}%
{\displaystyle\int\limits_{0}^{r}}
s^{2}ds\\
&  =\rho\frac{-\lambda r}{a(t)^{3}}+\frac{4\pi\Lambda}{3}\rho r+4K\rho
\frac{\overset{\cdot}{y}(\frac{r}{a(t)})}{a(t)^{2}}+\frac{4\pi\rho}%
{r^{2}a(t)^{3}}%
{\displaystyle\int\limits_{0}^{r}}
y(\frac{s}{a(t)})^{3}s^{2}ds-\frac{4\pi\Lambda}{3}\rho r\\
&  =\frac{\rho}{a(t)^{2}}\left[  -\frac{\lambda}{a(t)}r{\normalsize +4}%
K\overset{\cdot}{y}(\frac{r}{a(t)})+\frac{4\pi}{r^{2}a(t)}%
{\displaystyle\int\limits_{0}^{r}}
y(\frac{s}{a(t)})^{3}s^{2}ds\right]  \\
&  =\frac{\rho}{a(t)^{2}}\left[  -\frac{\lambda}{a(t)}r+4K\overset{\cdot}%
{y}(\frac{r}{a(t)})+\frac{\alpha(N)}{(\frac{r}{a(t)})^{2}}%
{\displaystyle\int\limits_{0}^{r/a(t)}}
y(s)^{3}s^{2}ds\right]  \\
&  =\frac{\rho}{a(t)^{2}}Q\left(  \frac{r}{a(t)}\right)  .
\end{align}
Here, we use the property of $a(t)$:%
\begin{equation}
\overset{\cdot\cdot}{a}(t)=\frac{-\lambda}{a(t)^{2}}+\frac{4\pi\Lambda}%
{3}a(t)=\frac{-\lambda}{a(t)^{2}}-a(t),
\end{equation}
and denote%
\begin{equation}
Q(\frac{r}{a(t)}):={\normalsize Q(x)=-\lambda x+4}K\overset{\cdot}%
{y}(x){\normalsize +}\frac{4\pi}{x^{2}}%
{\displaystyle\int\limits_{0}^{x}}
y(s)^{3}s^{2}ds{\normalsize .}%
\end{equation}
Differentiate $Q(x)$\ with respect to $x$,%
\begin{align}
\overset{\cdot}{Q}(x) &  =-{\normalsize \lambda+4K}\overset{\cdot\cdot}%
{y}(x){\normalsize +}4\pi y(x)^{3}-\frac{2\cdot4\pi}{x^{3}}%
{\displaystyle\int\limits_{0}^{x}}
y(s)^{3}s^{2}{\normalsize ds}\\
&  =-\frac{2}{x}\left[  \lambda x+4K\overset{\cdot}{y}(x)-K\mu x+\frac{4\pi
}{x^{2}}%
{\displaystyle\int\limits_{0}^{x}}
y(s)^{3}s^{2}ds\right]  \\
&  =-\frac{2}{x}Q(x).
\end{align}
With $\underset{x\rightarrow0^{+}}{\lim}Q(x)=Q(0)=0$, this implies that
$Q(x)=0$.\newline The above result is due to the fact that we choose the
Lane-Emden equation:%
\begin{equation}
\left\{
\begin{array}
[c]{c}%
\overset{\cdot\cdot}{y}(z){\normalsize +}\frac{2}{z}\overset{\cdot}%
{y}(z){\normalsize +}\frac{\pi}{K}{\normalsize y(z)}^{3}{\normalsize =}%
\mu{\normalsize ,}\text{ }\mu=\frac{3\lambda}{4K},\\
{\normalsize y(0)=\alpha>0,}\text{ }\overset{\cdot}{y}(0){\normalsize =0.}%
\end{array}
\right.  \label{eq11122}%
\end{equation}
We have shown that there exists the family of the solutions with compact
support, in Theorem \ref{thm:1}, with the well-known results about the
Lane-Emden equation (\ref{eq11122}) \cite{C} \cite{KW}.\newline On the other
hand, for $\Lambda=-\frac{3}{4}\pi<0$, it is clear for $\lambda=0$ in the
theorem is true. For the linear ordinary differential equation:%
\begin{equation}
\ddot{a}(t)+a(t)=0,\text{ }a(0)=a_{0}>0,\dot{a}(0)=0,
\end{equation}
the solution is
\begin{equation}
a(t)=a_{0}\cos t\text{.}%
\end{equation}
The function $a(\frac{\pi}{2})$ achieves zero and the solutions collapse in
the finite time $\pi/2$.\newline By using Lemma \ref{lemma22} about $a(t)$, we
have , for $\lambda<0$, the periodic solutions are non-trivial except the case
with $a_{0}=\sqrt[3]{\lambda}$ and $a_{1}=0$.\newline By using Lemma
\ref{lemma33}, we have, for $\lambda>0$ and $a_{1}\leq0$ the solutions
collapse in a finite time $T$.\newline This completes the proof.
\end{proof}

\begin{remark}
If we consider the system with frictional damping,
\begin{equation}
\left\{
\begin{array}
[c]{rl}%
{\normalsize \rho}_{t}{\normalsize +\nabla\bullet(\rho\vec{u})} &
{\normalsize =}{\normalsize 0,}\\
{\normalsize (\rho\vec{u})}_{t}{\normalsize +\nabla\bullet(\rho\vec{u}%
\otimes\vec{u})+\beta\rho\vec{u}+\nabla}P & {\normalsize =}{\normalsize -\rho
\nabla\Phi,}\\
{\normalsize \Delta\Phi(t,\vec{x})} & {\normalsize =\alpha(N)}%
{\normalsize \rho-}\Lambda{\normalsize ,}%
\end{array}
\right.
\end{equation}
with the constant $\beta>0$,\newline the corresponding ordinary differential
equation for the solutions are:%
\begin{equation}
\overset{\cdot\cdot}{a}(t)+\beta\dot{a}(t){\normalsize =}\frac{-\lambda
}{a(t)^{2}}-a(t),\text{ }{\normalsize a(0)=a}_{0}>0{\normalsize ,}\text{
}\overset{\cdot}{a}(0){\normalsize =a}_{1},
\end{equation}
\cite{Y}.\newline For $\lambda<0$, the function $a(t)$ oscillates around the
minimum $a_{\min}=\sqrt[3]{\lambda}$ of the potential energy except the case
with $a_{0}=\sqrt[3]{\lambda}$ and $a_{1}=0$. And it is asymptotically stable.
\end{remark}

\begin{remark}
In general, for the Euler-Piosson equations in $R^{N}$, it is clear to have
the corresponding solutions:\newline for $N=2$ and $\gamma=1$,
\begin{equation}
\left\{
\begin{array}
[c]{c}%
\rho(t,r)=\dfrac{1}{a^{2}(t)}e^{y\left(  r/a(t)\right)  }\text{,
}V{\normalsize (t,r)=}\dfrac{\dot{a}(t)}{a(t)}{\normalsize r;}\\
\ddot{a}(t){\normalsize =}\dfrac{-\lambda}{a(t)}-a(t),\text{ }%
{\normalsize a(0)=a}_{0}>0{\normalsize ,}\text{ }\dot{a}(0){\normalsize =a}%
_{1};\\
\ddot{y}(z){\normalsize +}\dfrac{1}{z}\dot{y}(z){\normalsize +\dfrac{2\pi}%
{K}e}^{y(z)}{\normalsize =\mu,}\text{ }y(0)=\alpha,\text{ }\dot{y}(0)=0,
\end{array}
\right.  \label{solution 3}%
\end{equation}
where $\mu=2\lambda/K$, \cite{Y1};\newline for $N\geq3$ and $\gamma
=(2N-2)/N$,
\begin{equation}
\left\{
\begin{array}
[c]{c}%
\rho(t,r)=\left\{
\begin{array}
[c]{c}%
\dfrac{1}{a^{N}(t)}y(\frac{r}{a(t)})^{N/(N-2)},\text{ for }r<a(t)Z_{\mu};\\
0,\text{ for }a(t)Z_{\mu}\leq r.
\end{array}
\right.  \text{, }V{\normalsize (t,r)=}\dfrac{\dot{a}(t)}{a(t)}%
{\normalsize r,}\\
\ddot{a}(t){\normalsize =}\dfrac{-\lambda}{a^{N-1}(t)}-a(t),\text{
}{\normalsize a(0)=a}_{0}>0{\normalsize ,}\text{ }\dot{a}(0){\normalsize =a}%
_{1},\\
\ddot{y}(z){\normalsize +}\dfrac{N-1}{z}\dot{y}(z){\normalsize +}\dfrac
{\alpha(N)}{(2N-2)K}{\normalsize y(z)}^{N/(N-2)}{\normalsize =\mu,}\text{
}y(0)=\alpha>0,\text{ }\dot{y}(0)=0,
\end{array}
\right.  \label{solution2}%
\end{equation}
where $\mu=[N(N-2)\lambda]/(2N-2)K$ and the finite $Z_{\mu}$ is the first zero
of $y(z)$ \cite{DXY}.\newline With $\lambda<0$, the solutions \ref{solution 3}
and \ref{solution2} are time-periodic for the negative cosmological constant.
\end{remark}

\begin{remark}
Under some initial conditions for $\Lambda>0$, the collapsing solutions are
obtained. The case is similar to the system without cosmological constant.
\end{remark}

\section{Discussion}

By the separation method, the solutions are constructed in its close relatives
(the Euler-Poisson equations with frictional damping, Navier-Stokes, Euler
equations and the ones with frictional damping term) \cite{Y} and \cite{Y2}.
However, we emphasis on that it is novel to have the time-periodic pattern in
the solution structure with $\Lambda<0$.

On the other hand, in the statement (2) of the theorem, with some initial
conditions, for example, $0<a(0)=\epsilon<<1$ and $\dot{a}(0)=0$, the density
becomes%
\begin{equation}
\rho(0,0)=\frac{\alpha^{3}}{\epsilon^{3}}>>M_{0}\text{,}%
\end{equation}
where $M_{0}$ is an arbitrary constant.\newline It may provide the possible
evolutionary model with $\Lambda<0$ constant in mathematical theory, that the
universe expands and then almost re-collapses (The density at the origin can
be greater than any given constant $\rho(T,0)>>M_{0}$), in the finite time
periodically. The solutions may be like a black hole (in the sense,
$\rho(T,0)=+\infty$ with a finite time $T$). The black hole seems to be the
end of the evolution of the periodic solutions. However actually, there exists
no authentic black hole in our particular time-periodic solutions.

As the time-periodic effect is due to the negative cosmological constant in
our model, more solutions with this pattern are expected for other $\gamma$
values. Then further works may be done by the computing simulations about the
stability of the numerical solutions. If some gaseous stars (a galaxy) obey
the $\gamma$- law $(\gamma=4/3)$, it may provide an alternative explanation
about that, the time-periodic solutions coincide with the expansion segment
(the red-shift effect) in a short time. We notice that it is the significant
difference between the traditional re-collapsing model with negative
cosmological constant $\Lambda<0$ \cite{D}.


\begin{thebibliography}{99}                                                                                               %


\bibitem {BT}J. Binney and S. Tremaine, \textit{Galactic Dynamics}, Princeton
Univ. Press, 1994.

\bibitem {C}S. Chandrasekhar, \textit{An Introduction to the Study of Stellar
Structure}, Univ. of Chicago Press, 1939.

\bibitem {D}M.P. D\c{a}browski, \textit{Future State of the Universe}, (2006)
Annalen der Physik (Leipzig), 15 (4-5), pp. 352-363.

\bibitem {DXY}Y.B. Deng, J.L. Xiang and T. Yang, \textit{Blowup Phenomena of
Solutions to Euler-Poisson Equations}, J. Math. Anal. Appl. \textbf{286}
(1)(2003), 295-306.

\bibitem {FT}H. H. Fliche and R. Triay, \textit{Euler-Poisson-Newton Approach
in Cosmology}, Cosmology and Gravitation, 346--360, AIP Conf. Proc., 910,
Amer. Inst. Phys., Melville, NY, 2007.

\bibitem {GW}P. Goldreich and S. Weber, \textit{Homologously Collapsing
Stellar Cores}, Astrophys, J. \textbf{238}, 991 (1980).

\bibitem {KW}R. Kippenhahn and A.Weigert, \textit{Stellar Sturture and
Evolution}, Springer-Verlag, 1990.

\bibitem {LS}W.D. Lakin and D. A. Sanchez, \textit{Topics in Ordinary
Differential Equations}, Dover Pub. Inc., New York, 1982.

\bibitem {PR}P.J.E. Peebles and B. Ratra, \textit{The cosmological constant
and dark energy, }Reviews of Modern Physics, Volume 75, Issue 2, April 2003,
Pages 559-606.

\bibitem {Y}M.W. Yuen, \textit{Blowup Solutions for a Class of Fluid Dynamical
Equations in }$R^{N}$, J. Math. Anal. Appl. \textbf{329} (2)(2007), 1064-1079.

\bibitem {Y1}M.W. Yuen, \textit{Analytical Blowup Solutions to the
2-dimensional Isothermal Euler-Poisson Equations of Gaseous Stars}, J. Math.
Anal. Appl. \textbf{341 (}1\textbf{)(}2008\textbf{), }445-456.

\bibitem {Y2}M.W. Yuen, \textit{Analyitcal Solutions to the Navier-Stokes
Equations}, J. Math. Phys., \textbf{49} (2008) No. 11, 113102, 10pp.
\end{thebibliography}
\end{document}